\newcommandx{\unsure}[2][1=]{\todo[linecolor=red,backgroundcolor=red!25,bordercolor=red,#1]{#2}}
\newcommandx{\change}[2][1=]{\todo[linecolor=blue,backgroundcolor=blue!25,bordercolor=blue,#1]{#2}}
\newcommandx{\info}[2][1=]{\todo[linecolor=OliveGreen,backgroundcolor=OliveGreen!25,bordercolor=OliveGreen,#1]{#2}}
\newcommandx{\improvement}[2][1=]{\todo[linecolor=Plum,backgroundcolor=Plum!25,bordercolor=Plum,#1]{#2}}
\newcommandx{\thiswillnotshow}[2][1=]{\todo[disable,#1]{#2}}
\newcommand{\removed}[1]{{}}
\pgfplotsset{compat=newest}
\pgfplotsset{plot coordinates/math parser=false}
\title{\LARGE \bf {Filtering in Projection-based Integrators \\for Improved Phase Characteristics}}
\author{Hoang Chu, S.J.A.M van den Eijnden, M.F. Heertjes, W.P.M.H. Heemels   %
\thanks{The authors are with the Control Systems Technology Section, Dept. Mechanical Engineering, Eindhoven University of Technology, The Netherlands. M.F. Heertjes is also with ASML,  Mechatronics \& Measurements Systems, Veldhoven, The Netherlands. Corresponding author: Hoang Chu
({\tt\small h.chu@tue.nl})
\newline \indent This research received funding form the European Research Council (ERC) under the Advanced ERC grant agreement PROACTHIS, no. 101055384.}%
}
\newtheorem{theorem}{Theorem}
\newtheorem{remark}{Remark}
\let\leq\leqslant
\let\geq\geqslant
\let\cal\mathcal
\newcommand{\remove}[1]{ }
\newcommand{\ree}{\mathbb{R}}
\newcommand{\cS}{{\cal S}}
\newcommand{\xh}{{x_h}}
\newcommand{\dxh}{\delta{x_h}}
\newcommand{\dv}{\delta{v}}
\DeclareMathOperator*{\argmin}{\arg\!\min}
\begin{document}
\maketitle
\thispagestyle{empty}
\pagestyle{empty}

\begin{abstract}
Projection-based integrators are effectively employed in high-precision systems with growing industrial success. By utilizing a projection operator, the resulting projection-based integrator keeps its input-output pair within a designated sector set, leading to unique freedom in control design that can be directly translated into performance benefits. This paper aims to enhance projection-based integrators by incorporating well-crafted linear filters into its structure, resulting in a new class of projected integrators that includes the earlier ones, such as the hybrid-integrator gain systems (with and without pre-filtering) as special cases. The extra design freedom in the form of two filters in the input paths to the projection operator and the internal dynamics allows the controller to break away from the inherent limitations of the linear control design. The enhanced performance properties of the proposed structure are formally demonstrated through a (quasi-linear) describing function analysis, the absence of the gain-loss problem, and numerical case studies showcasing improved time-domain properties. The describing function analysis is supported by rigorously showing incremental properties of the new filtered projection-based integrators thereby guaranteeing that the computed steady-state responses are unique and asymptotically stable. 
% \textcolor{red}{something about system theoretic fundamental properties here? Stabiity..}
\end{abstract}

\tikzstyle{block} = [draw, fill=blue!20, rectangle, 
    minimum height=3em, minimum width=6em]
\tikzstyle{sum} = [draw, fill=blue!20, circle, node distance=1cm]
\tikzstyle{input} = [coordinate]
\tikzstyle{output} = [coordinate]
\tikzstyle{pinstyle} = [pin edge={to-,thin,black}]

\section{Introduction}
% set up context
To meet the ever-increasing performance demands of high-precision motion systems, the use of control systems is key. Nowadays, linear controllers such as proportional-integral-derivative (PID) controllers are still the standard in many industries. The reason for the widespread use of linear controllers seems attributed to the transparency in analysis and design, and their ease of use. Linear controllers, however, are subject to fundamental limitations such as Bode's gain-phase realtionship and other frequency- and time-domain limitations \cite{SeronLimit}. Nonlinear controllers, on the other hand, are not limited in the same way as LTI controllers, suggesting possibilities to overcome the classical limitations of linear control, and thereby to realize unparralelled control performance (\cite{Heertjes_2020}). Of particular interest are nonlinear integrators that preserve the interpretation and functionality of a linear integrator with enhanced phase properties due to the nonlinear nature. Examples include integrators with a state reset map (e.g., the Clegg integrator \cite{Clegg,NESIC2008}), integrators with discontinuous input (e.g., \cite{zeroPhaseShift}), split-path integrators \cite{SPANoriginal,ShaMaa_TCST22a}, and projection-based integrators (PBIs) \cite{DeeSha_AUT21a,Shi_2022}.

% talk about HIGS
Projection-based integrators have been gaining attention recently due to their favorable properties from a system theoretical perspective and their potential in industrial applications. 
PBIs are employed in controllers with input constraints (and built-in anti-windup mechanisms), see, e.g., \cite{LORENZETTI2022110606,ZaoPBC}, or in controllers enforcing input-output properties for direct performance enhancements \cite{DeeSha_AUT21a}. In this paper, we focus on the latter type of PBI, because a notable example of this type - hybrid integrator-gain systems (HIGS) \cite{DeeSha_AUT21a} - has proven successful in industrial applications such as wafer scanners \cite{Heertjes_2020}, micro-electro-mechanical systems \cite{Shi_2022,shi2023negative}, and wire-bonders \cite{EijkBeerHODFHIGS}. The key idea underlying this projection-based integrator is to keep its input-output pair in a specified sector set by means of projection.
As a result of applying such a sector-based projection, the input and output of HIGS always have equal sign. In turn, sign equivalence yields a significant reduction in phase lag from $90$ degrees in LTI integrators to $38.1$ degrees in projection-based integrators as understood from a describing function perspective \cite{EijCCTA}. This phase lag reduction allows for additional freedom in controller design since reduced phase lag allows for increased controller gains, resulting in improvements in  system bandwidth and disturbance rejection properties. Although the gain and the phase of projection-based integrators no longer adhere to the classical constraints due to the Bode gain-phase relationship (i.e., a 20dB-per-decade magnitude slope paired with $90$-degree phase lag) and increased performance compared to LTI control, there is a strong interest in creating further flexibility in their design.

\begin{figure}[tb!]
    \centering
    \includegraphics[width = 0.45\textwidth]{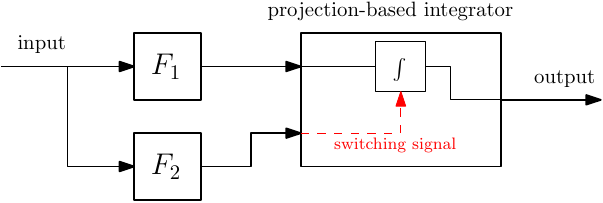}
    \caption{Projection-based integrator. The LTI filter $F_1$ adheres to the ``gain'' characteristics of the element, whereas $F_2$ determines the ``phase'' characteristics. The switching signal, affected by the choice for $F_2$, determines when projection is applied to the integrator state.}
    \label{fig:filter HIGS structure}
\end{figure}

To provide additional freedom in design, we propose a novel structure of projection-based integrators. The general idea is depicted in Fig.~\ref{fig:filter HIGS structure},  where the LTI filters $F_1$ and $F_2$ are key for gain and phase tuning, respectively. This setup might remind of a structure used in \cite{Nima} in the context of reset control or in the context of filter split-path nonlinear integrators \cite{ShaMaa_TCST22a}. Interestingly, projection-based integrators result in continuous control outputs (which are more amenable to analysis), rather than discontinuous ones as obtained with the aforementioned strategies. To elaborate, the filter $F_2$ is used to regulate the switching behavior (instances of projection) of the projection element. That is, rather than switching based on the input directly (as in original schemes \cite{DeeSha_AUT21a}), the novelty of the element is in switching based on the signal filtered by $F_2$. Switching in PBIs is typically associated with phase properties, and thus manipulating the switching signal through the fiter $F_2$ allows for tuning the phase. The filter $F_1$ mainly affects how the input to the PBI is entering the internal integrator dynamics. Noteworthy is that this structure has the standard HIGS \cite{DeeSha_AUT21a}  and the cascade of a prefilter $F$ and HIGS as used in \cite{EijHee_CSL20a} to overcome the fundamental performance limitations of LTI controllers, as special cases by taking  $F_1=F_2=1$ or taking $F_1=F_2=F$. Here we decouple the two filters, allowing further performance benefits. %, as we will see. 

In line with the above, the main contributions of this paper are summarized as follows. \begin{enumerate}[(i)]
\item First, we introduce the novel projection-based integrator, termed Filtered HIGS (abbreviated as FHIGS) due to its connections with HIGS and provide the appropriate mathematical model of FHIGS within the framework of extended projected dynamical systems (ePDS) \cite{DeeSha_AUT21a}. 
\item We derive an equivalent piecewise linear (PWL) model of FHIGS useful for further analysis, 
\item We calculate the describing function (DF) of FHIGS using the PWL model, and 
\item We provide incremental stability guarantees of FHIGS leading to the existence of a unique steady-state response to a sinusoidal input, thereby providing a strong fundamental base for the DF. 
\item Finally, we demonstrate the enhanced performance properties of the proposed FHIGS structure with numerical case studies through a (quasi-linear) DF comparison to HIGS and LTI controllers, the mitigation of the gain-loss problem \cite{HeeEij_CCTA21a}, and improved time-domain properties. %{\bf check if this is correct}
\end{enumerate}

% , we aim to generalize HIGS by separating the gain and phase of the nonlinear structure using different filters. This new structure (see Fig.~\ref{fig:filter HIGS structure}) - named filtered hybrid integrator-gain system (hereafter referred to as Filtered HIGS, or FHIGS) - uses an additional filter ($F_2$) to regulate the switching behavior of the nonlinear integrator. The FHIGS element, therefore, inherits the gain from $F_1$ and the phase from $F_2$, thus allowing for (almost) separate tuning of gain and phase, similar to the benefit of the shaping filter in \cite{ShapingFilterResetHosseinNia}. With this separate gain-phase structure, FHIGS can have more phase advantage than HIGS by switching to the gain mode earlier, while having approximately the same gain. FHIGS can also avoid the gain-loss problem of HIGS \cite{HeeEij_CCTA21a}, due to the capability of suppressing certain frequencies using $F_2$. 

% \begin{figure}[bp]
%     \centering
%     \includegraphics[width = 0.48\textwidth]{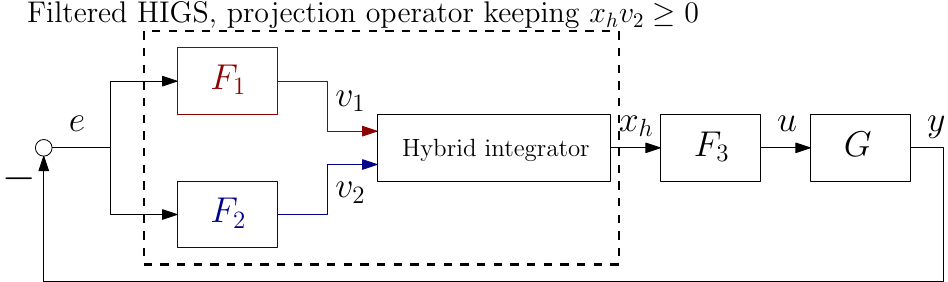}
%     \caption{A control system with Filtered HIGS. The systems $F_1,F_2,F_3,G$ are LTI. The red branch determines the "gain" of FHIGS, while the blue branch determines the "phase" of FHIGS.}
%     \label{fig:filter HIGS structure}
% \end{figure}

% paper structure
The remainder of the paper is as follows. We introduce FHIGS in Section \ref{sec:ePDS representation} and formalize this in the ePDS framework. A more explicit model in the form of a PWL system is provided in Section \ref{sec: PWL openloop}. In addition to these modeling contributions, we will explicitly compute a steady-state response of the FHIGS to a sinusoidal input, which will be used to compute its describing function in Section \ref{sec: DF}. To justify the describing function, we will show incremental properties of FHIGS in Section \ref{sec: convergence}. We proceed to show some numerical examples to illustrate the benefits of filtered projection integrators in Section \ref{sec: examples} and conclude the paper in Section \ref{sec: conclusion}.

\section{Projected dynamics of FHIGS} \label{sec:ePDS representation}
In this section, we formally introduce FHIGS. We briefly describe the control system, show how a projection operator benefits the response of a nonlinear integrator in the time domain, and proceed to give a representation of a control system with FHIGS in terms of an extended projected dynamical system \cite{heemelsaneel_lcss_2023}.

\subsection{System description}
\begin{figure}[tbp]
    \centering
    \smallskip\smallskip
    \includegraphics[width = 0.48\textwidth]{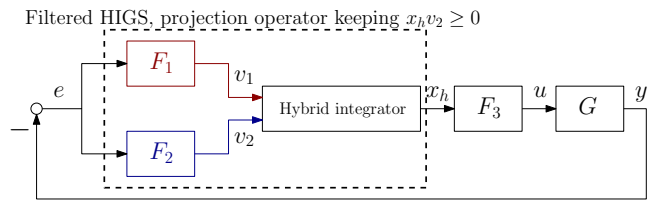}
    \caption{Filtered HIGS in closed-loop with other linear systems $F_3$ and $G$}
    \label{fig:FHIGS closedloop}
\end{figure}
We consider the closed-loop system depicted in Fig.~\ref{fig:FHIGS closedloop} where $F_1$, $F_2$, $F_3$, and $G$ are SISO LTI systems. The cascade of the loop-shaping filter $F_3$ and the linear system $G$ is considered as the plant for which the dynamics are given by
\begin{equation}
\begin{aligned}
    \dot x_p &= A_p x_p + B_p x_h,\\
   y &= C_p x_p,
\end{aligned}
\end{equation}
with $x_p \in \ree^{n_p}$ the states, $x_h \in \mathbb{R}$ the (control) input, and $y \in \mathbb{R}$ the output. %, and $C_p B_p = 0$. 
The starting point of FHIGS is a  first-order dynamics (before introducing the projection operator), combined with state-space models for the filters $F_1$ and $F_2$, resulting in the linear unprojected dynamics
\begin{equation} \label{eq: open unprojected fhigs}
\begin{aligned}
    \begin{bmatrix}
        \dot x_h \\ \dot x_{v_1} \\ \dot x_{v_2} 
    \end{bmatrix} &= \begin{bmatrix}
        -\alpha_h &\omega_h C_{v_1} &0\\ 0 &A_{v_1} &0\\ 0&0 &A_{v_2}
    \end{bmatrix} \begin{bmatrix}
        x_h\\ x_{v_1} \\ x_{v_2}
    \end{bmatrix} + \begin{bmatrix}
        \omega_h D_{v_1} \\ B_{v_1}\\ B_{v_2}
    \end{bmatrix} e\\
    &=: f_c(x_c,e)\\
    v_1 &= C_{v_1} x_{v_1}+ D_{v_1} e,\\
    v_2 &= C_{v_2} x_{v_2}+ D_{v_2} e,
\end{aligned}
\end{equation}
with filter states $x_{v_1} \in \ree^{n_{v_1}}$ and  $x_{v_2} \in \ree^{n_{v_2}}$ for filter $F_1$ and $F_2$, respectively, first-order dynamics state $x_h \in \ree$, overall controller state $x_c = \begin{bmatrix}
    x_h^\top & x_{v_1}^\top & x_{v_2}^\top
\end{bmatrix}^\top$, and controller input $e\in \mathbb{R}$. Note that for $\alpha_h=0$, we obtain integrator dynamics. Here, $\omega_h >0$ is a controller parameter in the first-order dynamics. % {\bf we always talk about integrator dynamics, but we have first-order dynamics, both in intro, abstract and main text... why? }.
Furthermore, $(A_{v_i}, B_{v_i}, C_{v_i}, D_{v_i})$, $i=1,2$ are matrix quadruples of appropriate dimensions describing the filter dynamics. 

The \emph{unprojected} closed-loop dynamics with the interconnection $e = -y$ are given by
\begin{equation}\label{eq:CL}
    \begin{bmatrix}
        \dot x_c \\ \dot x_p 
    \end{bmatrix} = \begin{bmatrix}
        f_c(x_c, -C_p x_p)\\
        A_p x_p + B_px_h
    \end{bmatrix}=: f(x),
\end{equation}
with $x = \begin{bmatrix}
    x_c^\top &x_p^\top
\end{bmatrix}^\top.$ Next, we will motivate and introduce projection into the closed-loop control system \eqref{eq:CL}. 

\subsection{Motivation of using the projection operator}

\begin{figure}[tbp]
    \centering
    \smallskip\smallskip\includegraphics[width = 0.15\textwidth]{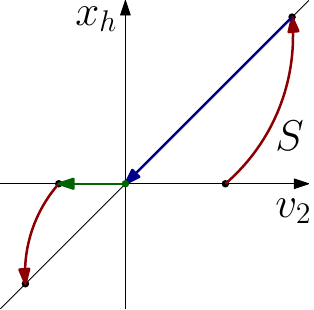}
    \caption{Example of FHIGS trajectory in a sector set}
    \label{fig:fhigs trajectory}
\end{figure}

\begin{figure}[tp]
    \centering
    \includegraphics[width=0.45\textwidth]{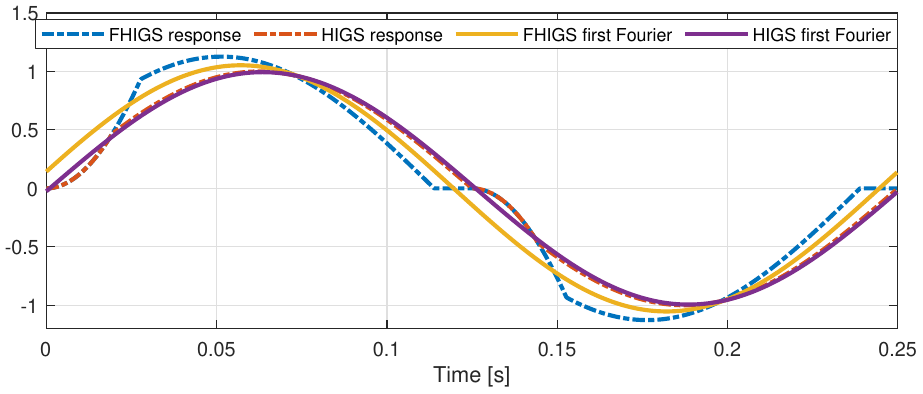}
    \caption{HIGS and FHIGS steady-state sinusoidal responses with their first Fourier approximations. FHIGS (yellow) has approximately the same gain and more phase lead compared with HIGS (purple)}
    \label{fig: Fourier}
\end{figure}

% details about FHIGS: sector, ePDS, PWL
The rationale underlying FHIGS is to keep the {\em $F_2$-filtered input} and the first-order-dynamics state in a sector set using a projection operator. The sector set is defined as
\begin{equation}
    S = \{(x_h,v_2)\in \ree^2 \mid (x_h-k_1 v_2)(x_h-k_2 v_2) \leq 0\},
\end{equation}
(see Fig.~\ref{fig:fhigs trajectory} for the set $S$ with $k_1=0,k_2=1$).
When the trajectory of $(x_h,v_2)$ is in the interior of the sector set, the first order dynamics are active ($\dot x_h =-\alpha_h x_h + \omega_h v_1$), and the projection operator does not have any effect. When the trajectory is at the boundary and tends to move outside the sector set, the dynamics are ``projected'' in such a way that the trajectory remains on the boundary of the sector set (see Fig.~\ref{fig:fhigs trajectory}) -- below we will formalize the projection operator. Since the sector set now depends on the filtered input $v_2$, it is possible to change the switching instants to achieve more phase advantage from a describing function perspective (see Fig.~\ref{fig: Fourier} which depicts the first harmonics of the output of FHIGS to a sinusoidal input) compared to classical HIGS, which corresponds to the case where the filters $F_1$ and $F_2$ are simply constant gains equal to $1$.

\subsection{ePDS formalization of a control system with FHIGS}

We formalize the dynamics of the control system with the proposed FHIGS element in the context of ePDS \cite{DeeSha_AUT21a}. This ePDS framework forms an extension to the classical projected dynamical systems (\cite{HENRY1972545,nagurney1995projected}), by allowing partial projection of the dynamics (\cite{heemelsaneel_lcss_2023}). This is needed because in the control practice only the states of the controller allow for projection, not the plant states. 

To define FHIGS, a projection operator is employed to keep the pair $(x_h,v_2)$ in the sector set $S$, or equivalently, to keep the states $x$ in the set
\small
\begin{equation}
    \cal S = \left\{ x \in \ree^{n+1} \mid x^\top\begin{bmatrix}
        1\\ 0_{n_{v_1}} \\-k_1C_{v_2}^\top \\k_1 C_p^\top D_{v_2}^\top
\end{bmatrix}\begin{bmatrix}
        1\\ 0_{n_{v_1}} \\-k_2C_{v_2}^\top \\k_2 C_p^\top D_{v_2}^\top
    \end{bmatrix}^\top x \leq 0   \right\},
\end{equation}
\normalsize
with $n := n_c+n_p$, $n_c:=n_{v_1}+n_{v_2}$. The projection operator is given by 
\begin{equation}\Pi_{\cal S,E}(\xi,p) := \argmin_{w \in T_{\cal S}(\xi), w - p \in \cal E} \|w-p\|. \label{piSE}
\end{equation}
Here, $T_{\cS}(\xi)$ is the tangent cone to the set $\cS \subset \mathbb{R}^{m+n}$ at a point $\xi \in \cS$,  defined as the collection of all vectors $p\in \mathbb{R}^{m+n}$ for which there exist sequences $\{x_i\}_{i\in \mathbb{N}} \in S$ and $\{\tau_i\}_{i\in \mathbb{N}}$, $\tau_i > 0$
with $x_i \rightarrow x$, $\tau_i \downarrow 0$ and $i \rightarrow \infty$, such that $p = \lim_{i\rightarrow \infty} \frac{x_i - x}{\tau_i}.$ The subspace $\cal E \subseteq \ree^{n+1}$ is the set of admissible directions for the projection, i.e., $w-p \in \cal E.$
The projected closed-loop dynamics are then written similarly to that in \cite{heemelsaneel_lcss_2023,DeeSha_AUT21a}
\begin{equation} \label{eq: cloop projected}
    \dot x = \Pi_{\cal S, \cal E} (x,f(x)),
\end{equation}
with $\cal E = \textrm{im}(E)$, $E = \begin{bmatrix}
    1 &0_n^\top
\end{bmatrix}^\top$, indicating that the projection operator only affects the $x_h$-dynamics (see \cite{heemelsaneel_lcss_2023} for more details and for the existence of solutions to \eqref{eq: cloop projected}).

\section{PWL dynamics of open-loop FHIGS}\label{sec: PWL openloop}
A piecewise linear (PWL) dynamics representation will be derived from the introduced ePDS representation. The PWL dynamics are used to explicitly compute the response to a sinusoidal input, leading to the calculation of the describing function of FHIGS in Section~\ref{sec: DF}. Also a PWL formulation of FHIGS allows for the application of analysis tools of hybrid systems, e.g., the use of piecewise quadratic Lyapunov functions and linear matrix inequalities to guarantee stability and performance of the closed-loop system (using adaptations of \cite{PWQLFforHybrid} along the lines of \cite{EijHee_NAHS22a}). For ease of calculations, we assume that $C_p B_p = 0$, i.e., $F_3$ and $G$ together have relative degree of $2$ or higher, which is reasonable in many settings, including motion control systems.

\begin{theorem}\label{theorem: pwl}
Consider the projected dynamics \eqref{eq: cloop projected} with \eqref{eq: open unprojected fhigs} and assume that $C_p B_p = 0$ and that $e$ is a differentiable function. Then, the FHIGS dynamics can be written as
\begin{equation}
    \dot x_c = A_i x_c + B_i \begin{bmatrix}e \\ \dot e \end{bmatrix}, \,\textrm{if}\, (x_c,e,\dot e) \in \cal F_i,\,i \in \{0,1,2\},
    % \dot x_c &= \,\textrm{if}\, (x_c,e,\dot e) \in \cal F_1 \\
    % \dot x_c &= \,\textrm{if}\, (x_c,e,\dot e) \in \cal F_2
\end{equation}
with
\small
\begin{align*}
    A_0 &= \begin{bmatrix}
        -\alpha_h &\omega_h C_{v_1} &0\\ 0 &A_{v_1} &0\\ 0&0 &A_{v_2}
    \end{bmatrix}, &B_0 = \begin{bmatrix}
        \omega_h D_{v_1} &0 \\ B_{v_1} &0\\
        B_{v_2} &0
    \end{bmatrix},\\
    A_1 &= \begin{bmatrix}
        0 &0 &k_1 C_{v_2} A_{v_2}\\ 0 &A_{v_1} &0\\ 0&0 &A_{v_2}
    \end{bmatrix}, &B_1 = \begin{bmatrix}
        k_1 C_{v_2} B_{v_2} &k_1 D_{v_2} \\ B_{v_1} &0\\
        B_{v_2} &0
    \end{bmatrix},\\
    A_2 &= \begin{bmatrix}
        0 &0 &k_2 C_{v_2} A_{v_2}\\ 0 &A_{v_1} &0\\ 0&0 &A_{v_2}
    \end{bmatrix}, &B_1 = \begin{bmatrix}
        k_2 C_{v_2} B_{v_2} &k_2 D_{v_2} \\ B_{v_1} &0\\
        B_{v_2} &0
    \end{bmatrix},
\end{align*}
\normalsize
and
\begin{align*}
    \cal F_1 = \{ (x_c,e,\dot e) &\in \ree^{n_c+3} \mid x_h = k_1 {v_2}\wedge \\ &v_2(k_1 \dot {v_2}- (-\alpha_h x_h + \omega_h e)) > 0 \},\\
    \cal F_2 = \{ (x_c,e,\dot e) &\in \ree^{n_c+3} \mid x_h = k_2 {v_2}\wedge \\ &v_2(k_2 \dot {v_2}- (-\alpha_h x_h + \omega_h e)) < 0 \},\\
    \cal F_0 = \{ (x_c,e,\dot e) &\in \ree^{n_c+3} \mid (x_h,v_2) \in S \wedge \\ &(x_c,e,\dot e) \notin (\cal F_1 \cup \cal F_2) \},
\end{align*}
with $v_2 = C_{v_2} x_{v_2}+ D_{v_2} e$, $\dot v_2 = C_{v_2} (A_{v_2} x_{v_2} + B_{v_2} e)+ D_{v_2} \dot e.$
\end{theorem}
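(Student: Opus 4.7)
The plan is to unpack the ePDS formula $\dot x = \Pi_{\cal S,\cal E}(x,f(x))$ by a case analysis on where $(x_h,v_2)$ sits relative to the sector boundary, and to translate each case into one of the PWL modes $\cal F_0,\cal F_1,\cal F_2$. Throughout, the differentiability of $e$ is used to write $\dot v_2 = C_{v_2}(A_{v_2}x_{v_2}+B_{v_2}e) + D_{v_2}\dot e$ as an affine function of $(x_c,e,\dot e)$.

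First I would handle the interior case: when $(x_h,v_2)\in\interior(S)$, the tangent cone $T_{\cal S}(x)$ equals the whole ambient space, so the minimizer in \eqref{piSE} is trivially $w=f(x)$. Projecting on the controller coordinates gives $\dot x_c = f_c(x_c,e)$, which is exactly $A_0 x_c + B_0[e;\dot e]^\top$ (the $\dot e$-column of $B_0$ vanishes because the unprojected dynamics involve only $e$). This covers the part of $\cal F_0$ strictly inside $S$.

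Next I would treat the active-projection case on boundary~1 ($x_h=k_1 v_2$). Writing the sector-set inequality as $g_1(x)g_2(x)\le0$ with $g_i := x_h - k_i v_2$, on this boundary one has $g_1=0$ and $g_2=(k_1-k_2)v_2$. Linearising gives the tangent-cone characterization
\[
T_{\cal S}(x) = \{\,p \mid (k_1-k_2)v_2\,(p_h - k_1\,\langle\nabla v_2,p\rangle) \le 0\,\},
\]
and substituting $f(x)$ here yields, after bookkeeping of signs, exactly the defining inequality of $\cal F_1$ (the $k_2$ boundary is symmetric and yields $\cal F_2$). When $f(x)\notin T_{\cal S}(x)$, the direction restriction $w-f(x)\in\cal E=\mathrm{span}(E)$ forces only the $\dot x_h$ component to be adjusted, so that $w$ is determined by the unique tangency equality $\dot x_h = k_1\dot v_2$. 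Plugging in the explicit expression for $\dot v_2$ gives the first row of $A_1 x_c + B_1[e;\dot e]^\top$, while the filter rows are unchanged from $A_0,B_0$; this is mode~1, and mode~2 follows by replacing $k_1$ with $k_2$.

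The main technical obstacle is the tangent-cone computation at the boundary and verifying that the $\cal E$-constrained projection produces precisely the tangency law $\dot x_h = k_i \dot v_2$. This uses the fact that $\dot v_2$ does not depend on $\dot x_h$ itself, which is automatic in the open-loop treatment with $e,\dot e$ as exogenous inputs and is consistent with the standing assumption $C_pB_p=0$ for later closed-loop use. Finally, I would remark that points on the boundary at which $f(x)$ already lies in the tangent cone (including degenerate points such as $v_2=0$ where the two rays coincide) are swept into $\cal F_0$ by the strict inequalities defining $\cal F_1$ and $\cal F_2$, so the three regions partition the domain and the PWL description is well-posed.
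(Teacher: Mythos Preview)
Your case analysis is correct and arrives at the same PWL description, but the route is formally different from the paper's. The paper does not compute the tangent cone geometrically; instead it writes the sector as the union of two polyhedral cones $\pm\cal K=\{F[x_c^\top\ e]^\top\ge0\}$, introduces the active-constraint index set $I(x_c,e)$, and solves the projection via the KKT optimality system
\[
\dot x_c = f_c(x_c,e) + \bar E\eta^\star,\qquad 0\le \lambda_{I}\ \perp\ F_I\bigl(f'_c+E'\eta^\star\bigr)\ge 0,
\]
then reads off $\lambda_I$ (zero if the unprojected flow already satisfies the inequality, else $\lambda_I=-F_I f'_c$) to obtain $\dot x_h=k_i\dot v_2$ on the two active faces. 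Your argument reaches the same endpoint by linearising the constraint $g_1g_2\le 0$ at a boundary point to identify the tangent half-space and then using the one-dimensionality of $\cal E$ to force the tangency law. The KKT route is more mechanical and scales without change if additional constraints are present; your geometric argument is slightly more transparent about \emph{why} only $\dot x_h$ is modified and why the strict inequalities in $\cal F_1,\cal F_2$ correspond exactly to the complementarity switch. Either way the substance is the same computation.
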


The proof can be found in the appendix. 

Note the projected dynamics of the FHIGS are equivalent to a piecewise linear system with two gain modes (with gains $k_1$ and $k_2$) and a first-order-dynamics mode. The matrices $A_1,B_1,A_2,B_2$ are direct results of the two gain modes $\dot x_h = k_1 \dot v_2$ and $\dot x_h = k_2 \dot v_2$.

\section{Describing function analysis} \label{sec: DF}
In this section, the first-order describing function (DF) of FHIGS will be derived. To reduce the number of parameters, a simplified structure of FHIGS is introduced. The FHIGS structure is equivalent to the cascaded connection of the filter $F_1$ and the simplified structure (see Fig.~\ref{fig:simplifiedFHIGS}) if $F_1$ is an invertible filter and $F$ is selected as $F=F_1^{-1}F_2$. We also set $\alpha_h=0$ to obtain a closed-form solution of the DF of FHIGS. In addition to the parameters $k_1$, $k_2$, and $\omega_h$, FHIGS has additional parameters of the filter, namely, the magnitude and the phase of the transfer function $F(s)$ at frequency $\omega$, denoted $G(\omega) = |F(j\omega)|$, $\phi(\omega) = \angle F(j\omega)$, respectively.
\begin{figure}[tbp]
    \centering
    \smallskip\smallskip\includegraphics[width = 0.4 \textwidth]{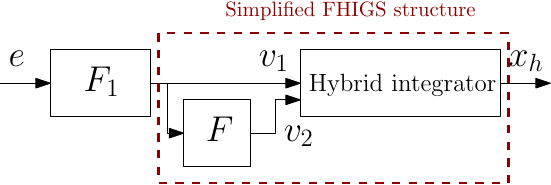}
    \caption{Simplified filtered HIGS structure with filter $F$. If $F=1$, FHIGS becomes the HIGS element.}
    \label{fig:simplifiedFHIGS}
\end{figure}

\begin{figure}[tbp]
    \centering
    \smallskip\smallskip\includegraphics[width = 0.45 \textwidth]{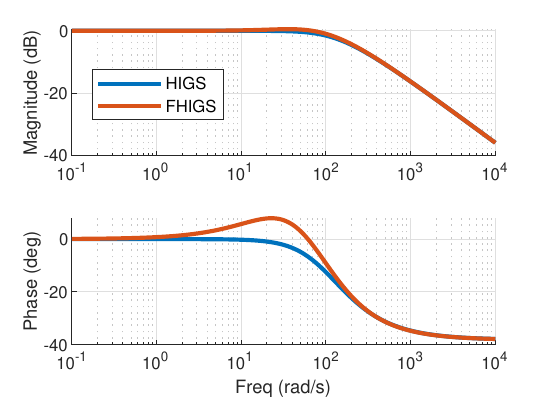}
    \caption{Comparison of the first-order describing function of HIGS and simplified FHIGS elements, with the parameters: $\omega_h = 100$, $k_h = 1$, $k_1 =0$, $k_2 = 1$, and FHIGS filter's transfer function $\frac{3(3s+2\omega_f)}{2(2s+3\omega_f)}$, with $\omega_f = 20 \pi$}
    \label{fig:HIGS compare}
\end{figure}

\subsection{Sinusoidal response with phase lead ($0\leq\phi(\omega)\leq \pi$)}\label{sec:response lead}
First, we consider a sinusoidal input $v_1 = A\sin{(\omega t)}$ and calculate the response of FHIGS in one period. Due to the two gain modes and integrator mode taking place sequentially (Fig.~\ref{fig: sinusoidal}), we find the steady-state response by simply integrating the dynamics in the integrator mode and taking $x_h = k_i v$ in the $k_i$-gain modes:
\small
\begin{equation}
\begin{aligned}
    &\frac{x_h(t)}{A} = \\&\begin{cases}
        k_1 G \sin{(\omega t + \phi)}, &\,\textrm{if}\, 0 \leq t \leq \frac{\epsilon}{\omega}\\
        k_1 G \sin{(\omega \epsilon + \phi)} + \frac{\omega_h(\cos{(\omega \epsilon)} - \cos{(\omega t)})}{\omega} &\, \textrm{if}\, \frac{\epsilon}{\omega}\leq t \leq \frac{\gamma}{\omega} \\
        k_2 G \sin{(\omega t + \phi)} &\, \textrm{if}\, \frac{\gamma}{\omega} \leq t \leq \frac{\pi - \phi}{\omega}\\
        k_1 G \sin{(\omega t + \phi)} &\,\textrm{if}\, \frac{\pi - \phi}{\omega} \leq t \leq \frac{\pi +\epsilon}{\omega}\\
        -k_1 G \sin{(\omega \epsilon + \phi)} - \frac{\omega_h(\cos{(\omega \epsilon)} - \cos{(\omega t)})}{\omega} &\, \textrm{if}\, \frac{\pi +\epsilon}{\omega}\leq t \leq \frac{\pi + \gamma}{\omega}\\
        k_2 G \sin{(\omega t + \phi)} &\, \textrm{if}\, \frac{\pi + \gamma}{\omega} \leq t \leq \frac{2\pi - \phi}{\omega}\\
        k_1 G \sin{(\omega t + \phi)} &\, \textrm{if}\, \frac{2\pi - \phi}{\omega} \leq t \leq \frac{2\pi}{\omega}
    \end{cases}
\end{aligned}
\end{equation}
\normalsize
\begin{figure}[tbp]
    \centering
    \includegraphics[width = 0.48\textwidth]{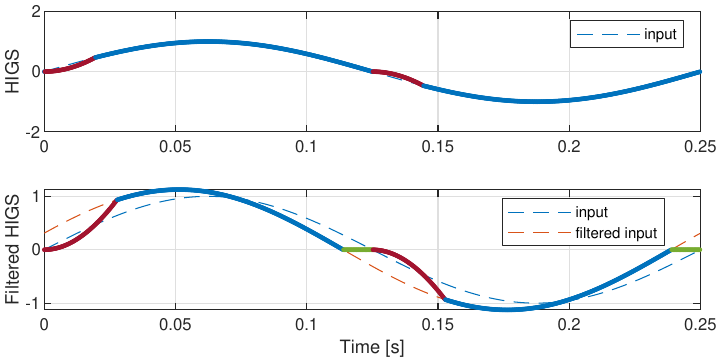}
    \caption{Steady-state responses of HIGS and FHIGS to a sinusoidal input (amplitude of 1, frequency of 4 Hz). The parameters of HIGS and FHIGS are $\omega_h = 100$, $k_h = 1$, $k_1 =0$, $k_2 = 1$, and the FHIGS filter has a transfer function of $\frac{3(3s+2\omega_f)}{2(2s+3\omega_f)}$, with $\omega_f = 20 \pi$. The multi-color lines show the responses of HIGS and FHIGS, and also indicate the active mode (red: integrator, green: $k_1$-gain, blue: $k_2$-gain (FHIGS) and gain (HIGS).}
    \label{fig: sinusoidal}
\end{figure}
The time instance $\epsilon(\omega)$ that determines the switching from the $k_1$-gain mode to the integrator mode (green to red in Fig.~\ref{fig: sinusoidal}) is the solution of the following equation
\begin{equation}
    \omega_h \sin{(\omega \epsilon) = k_1 G \omega \cos{(\omega \epsilon + \phi)}},
\end{equation}
which arises from the condition of the integrator mode being active on the $k_1$-line that $v(\omega_h e(\epsilon) - k_1 \dot v(\epsilon)) \geq 0$. The admissible solution that $0 \leq \epsilon \leq \pi/ \omega$ is
\begin{equation*}
    \epsilon = \frac{1}{\omega } {\cos ^{-1}\left(\frac{G k_1 \omega  \sin (\phi )+\omega _h}{\sqrt{G^2 k_1^2 \omega ^2+2 G k_1
   \omega  \omega _h \sin (\phi )+\omega _h^2}}\right)}
\end{equation*}
The other time instance $\gamma(\omega)$ that determines the switching from the integrator mode to the $k_2$-gain mode (red to blue in Fig.~\ref{fig: sinusoidal}) is found by solving the equality of the $k_2$-line:
\small
\begin{equation}\label{eq: solving gamma}
% \begin{aligned}
    k_2 G \sin{(\omega \gamma + \phi)} = k_1 G \sin{(\omega \epsilon + \phi)} + \frac{\omega_h(\cos{(\omega \epsilon)} - \cos{(\omega \gamma)})}{\omega}.
% \end{aligned}
\end{equation}
\normalsize

\subsection{Sinusoidal response with phase lag ($-\pi \leq \phi <0$)}\label{sec:response lag}
If the filter produces phase lag, the order of the modes changes. Two cases can happen: the $k_1$-gain mode is activated and not activated.
\subsubsection{The $k_1$ gain mode is activated}
This case happens when the $k_2$-gain mode can switch to the integrator mode during $t \in [0,-\phi/\omega]$. The response due to the sequence of modes (see Fig.~\ref{fig:sinusoidalLag}, top) is 
\small
\begin{equation}
\begin{aligned}
    &\frac{x_h(t)}{A} = \\&\begin{cases}
        k_2 G \sin{(\omega t + \phi)}, &\,\textrm{if}\, 0 \leq t \leq \frac{\epsilon}{\omega}\\
        k_2 G \sin{(\omega \epsilon + \phi)} + \frac{\omega_h(\cos{(\omega \epsilon)} - \cos{(\omega t)})}{\omega} &\, \textrm{if}\, \frac{\epsilon}{\omega}\leq t \leq \frac{\gamma}{\omega} \\
        k_1 G \sin{(\omega t + \phi)} &\, \textrm{if}\, \frac{\gamma}{\omega} \leq t \leq \frac{\pi - \phi}{\omega}\\
        k_2 G \sin{(\omega t + \phi)} &\,\textrm{if}\, \frac{\pi - \phi}{\omega} \leq t \leq \frac{\pi +\epsilon}{\omega}\\
        -k_2 G \sin{(\omega \epsilon + \phi)} - \frac{\omega_h(\cos{(\omega \epsilon)} - \cos{(\omega t)})}{\omega} &\, \textrm{if}\, \frac{\pi +\epsilon}{\omega}\leq t \leq \frac{\pi + \gamma}{\omega}\\
        k_1 G \sin{(\omega t + \phi)} &\, \textrm{if}\, \frac{\pi + \gamma}{\omega} \leq t \leq \frac{2\pi - \phi}{\omega}\\
        k_2 G \sin{(\omega t + \phi)} &\, \textrm{if}\, \frac{2\pi - \phi}{\omega} \leq t \leq \frac{2\pi}{\omega}
    \end{cases}
\end{aligned}
\end{equation}
\normalsize

\begin{figure}[tbp]
    \centering
    \smallskip\includegraphics[width=0.45\textwidth]{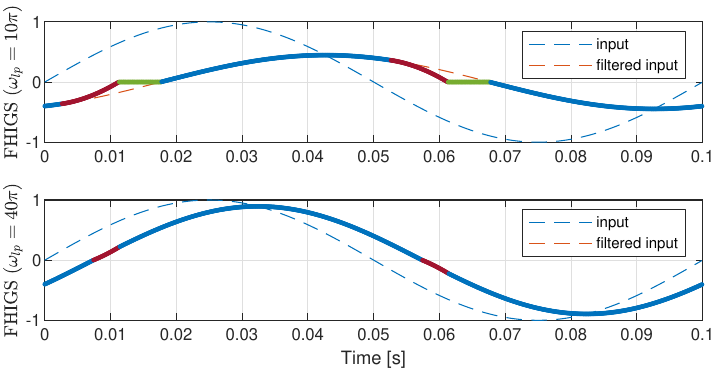}
    \caption{Steady-state responses of FHIGS to a sinusoidal input (amplitude of 1, frequency of 10 Hz). The parameters of HIGS and FHIGS are $\omega_h = 100$, $k_h = 1$, $k_1 =0$, $k_2 = 1$, and the FHIGS filter has a transfer function of $\frac{\omega_{lp}}{s+\omega_{lp}}.$}
    \label{fig:sinusoidalLag}
\end{figure}

The time instance $\epsilon(\omega)$ now determines the switching from the $k_2$-gain mode to the integrator mode, and is  solution to $\omega_h \sin{(\omega {\epsilon}) = k_2 G \omega \cos{(\omega {\epsilon} + \phi)}}.$
The other time instance $\gamma(\omega)$ now indicates the switching from the integrator mode to the $k_1$-gain mode, which is the solution to
\small
\begin{equation}\label{eq: gamma lag}
    k_1 G \sin{(\omega \gamma + \phi)} = k_2 G \sin{(\omega \epsilon + \phi)} + \frac{\omega_h(\cos{(\omega \epsilon)} - \cos{(\omega \gamma)})}{\omega}.
\end{equation}
\normalsize

\subsubsection{The $k_1$-gain mode is not activated}
This case happens when the $k_2$-gain mode cannot switch to the integrator mode during $t \in [0,-\phi/\omega]$. The response of FHIGS (see Fig~\ref{fig:sinusoidalLag}) becomes
\begin{equation}
    \frac{x_h(t)}{A} = \begin{cases}
        k_2 G \sin{(\omega t + \phi)}, &\,\textrm{if}\, 0 \leq t \leq \frac{-\phi}{\omega}\\
        \frac{\omega_h(\cos{(-\phi)} - \cos{(\omega t)})}{\omega} &\, \textrm{if}\, \frac{-\phi}{\omega}\leq t \leq \frac{\gamma}{\omega} \\
        k_2 G \sin{(\omega t + \phi)} &\,\textrm{if}\,  \frac{\gamma}{\omega}\leq t \leq \frac{\pi - \phi}{\omega}\\
         - \frac{\omega_h(\cos{(-\phi)} - \cos{(\omega t)})}{\omega} &\, \textrm{if}\, \frac{\pi - \phi}{\omega}\leq t \leq \frac{\pi + \gamma}{\omega}\\
        k_2 G \sin{(\omega t + \phi)} &\, \textrm{if}\, \frac{\pi + \gamma}{\omega} \leq t \leq \frac{2\pi}{\omega}
    \end{cases}
\end{equation}
\normalsize
with the switching instant $\gamma(\omega)$ being the solution to \eqref{eq: solving gamma} with $\epsilon = -\phi/\omega$.

\subsection{Describing function calculation} \label{sec:df calculation}
The $k$-th Fourier coefficients are calculated as
\begin{align*}
    a_k(\omega) &= \frac{\omega}{\pi} \int_0^{\frac{2\pi}{\omega}} x_h(t) \cos{(k\omega t)} dt,\\
    b_k(\omega) &= \frac{\omega}{\pi} \int_0^{\frac{2\pi}{\omega}} x_h(t) \sin{(k\omega t)} dt,
\end{align*}
and the $k$-th order describing function is then defined as \begin{equation}
    \cal D_k (\omega) = b_k(\omega) + j a_k(\omega).
\end{equation}
Similar to HIGS (\cite{EijkBeerHODFHIGS}), due to the response being an odd function, $\cal D_k(\omega) = 0$ for all $\omega >0$ if $k$ is even. The closed-form solution of $D_k$ for odd $k$ is given in the appendix.

\begin{remark}\label{remark: responses}
    Note that for sinusoidal inputs and $\alpha_h=0$, we computed here explicitly steady-state solutions for different cases. Hence, we proved the existence. In Section~\ref{sec: convergence} below we will also establish the uniqueness of this steady-state solution, thereby showing there are no other steady-state responses, which is important for the formalization of the describing function below. The uniqueness of the steady-state response will be proven by establishing incremental attractivity properties, which also proves then that the calculated steady-state response above is attractive in the sense that irrespective of the initial state of the FHIGS all solutions will eventually converge to this unique steady state.
\end{remark}

\subsection{Comparison with HIGS's DF}
A comparison of the describing function of the filtered HIGS and the original HIGS elements is given in Fig.~\ref{fig:HIGS compare}. The HIGS and FHIGS elements being compared are the ones that produce the response in Fig.~\ref{fig: sinusoidal}. With the added filter, FHIGS has a phase advantage over HIGS, and at some frequencies, the DF of FHIGS even has a positive phase lead. This phase lead is not possible for HIGS without an additional filter (e.g., a lead-lag filter) to process its input or output. However, the additional filter alters the magnitude at high frequencies since it obeys the Bode gain-phase relationship. FHIGS, on the other hand, does not change the magnitude at high frequencies.

% \begin{figure}[tbp]
%     \centering
%     \includegraphics[width = 0.5 \textwidth]{Figs/hodf_fhigs.eps}
%     \caption{Higher order describing functions of FHIGS with the same parameters as in Fig.~\ref{fig:HIGS compare}}
%     \label{fig:hodf}
% \end{figure}

\section{Incremental attractivity of open-loop FHIGS} \label{sec: convergence}
To prove the uniqueness of the steady-state solution computed in Section~\ref{sec: DF}, we show the incremental attractivity of FHIGS. For ease of calculation, we focus on the simplified structure of FHIGS (see Fig.~\ref{fig:simplifiedFHIGS}) with filter states $x_v \in \ree^n$, filter matrices $(A_v,B_v,C_v,D_v)$, and filter output $v_2 = C_v x_v + D_v v_1$. The states of FHIGS then become $x_c = \begin{bmatrix}
    x_h &x_v^\top
\end{bmatrix}^\top$. We state the incremental attractivity of open-loop FHIGS in the next theorem.

\begin{theorem}
    If $\alpha_h \geq 0$, $k_2> 0 \geq k_1$, and $A_{v}$ 
    % {\bf $A_v$ or $A_{v_1} - PLEASE CHECK} 
    is a Hurwitz matrix, given any different initial conditions $x_c(0)$ and $x'_c(0)$, the responses of FHIGS to the same bounded and differentiable input $v_1$, denoted $x_c(t,v_1,x_c(0))$ and $x_c(t,v_1,x'_c(0))$, converge, i.e.,
    \begin{equation*}
        \lim_{t\to\infty} \|x_c(t,v_1,x_c(0))-x_c(t,v_1,x'_c(0))\| = 0.
    \end{equation*}
\end{theorem}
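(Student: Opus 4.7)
The plan is to exploit the cascade structure of the open-loop FHIGS: the filter state $x_v$ is driven only by the common input $v_1$, and the scalar $x_h$ is then driven by $v_1$ and the filtered signal $v_2$ through the projected first-order dynamics. The proof therefore splits into a linear filter-convergence part and a scalar projected-convergence part.

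First, for the filter, note that both systems see the same $v_1$, so the filter-state error $\delta x_v := x_v - x_v'$ satisfies the autonomous linear equation $\delta \dot x_v = A_v\,\delta x_v$. Since $A_v$ is Hurwitz, $\|\delta x_v(t)\|\leq Me^{-\lambda t}\|\delta x_v(0)\|$ for some $M,\lambda>0$; in particular $\delta v_2 = C_v\,\delta x_v$ and $\delta \dot v_2 = C_v A_v\,\delta x_v$ decay exponentially. Boundedness of $v_1$ together with Hurwitz-ness of $A_v$ also yields uniform bounds on $x_v,v_2,\dot v_2$ for both trajectories, and through the sector constraint on $x_h,x_h'$ as well.

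Second, for the scalar error $\delta x_h := x_h - x_h'$, I would use the Lyapunov candidate $V := \tfrac{1}{2}(\delta x_h)^2$ together with the PWL representation of Theorem~\ref{theorem: pwl}, performing a case analysis over the nine pairs of modes. When both trajectories are in mode $0$, $\delta \dot x_h = -\alpha_h\,\delta x_h$ and hence $\dot V = -2\alpha_h V \leq 0$. When both sit in a common boundary mode $i\in\{1,2\}$, the algebraic identity $\delta x_h = k_i\,\delta v_2$ forces $V = \tfrac{k_i^2}{2}(\delta v_2)^2$, which vanishes exponentially by the previous step. In each of the remaining mixed-mode cases I would write $\delta \dot x_h = -\alpha_h\,\delta x_h + r(t)$, where the reaction $r(t)$ comes from just one of the two projections being active; using the sign conditions defining $\cal F_1,\cal F_2$ in Theorem~\ref{theorem: pwl} together with the orientation hypothesis $k_2>0\geq k_1$, one bounds $r(t)$ by $C\bigl(|\delta v_2(t)|+|\delta \dot v_2(t)|\bigr)$ for a constant $C$ depending only on the uniform bounds established above, yielding $\dot V \leq -2\alpha_h V + C'|\delta v_2(t)|$. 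When $\alpha_h>0$, integrating this inequality and invoking the exponential decay of $\delta v_2$ gives $V(t)\to 0$, and combined with Step~1 this proves $\|\delta x_c\|\to 0$.

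The hardest part will be the mixed-mode analysis: when exactly one of the two trajectories is being projected, a non-dissipative reaction term appears in $\delta \dot x_h$, and one must show geometrically that it is always dominated by $|\delta v_2|$ rather than by $|\delta x_h|$ itself, using the sign structure encoded in $\cal F_1,\cal F_2$ and the sector orientation $k_2>0\geq k_1$. A secondary subtlety is the pure-integrator sub-case $\alpha_h=0$, in which mode $0$ only gives $\dot V=0$; there one has to exploit instead that each visit to a common boundary mode resets $V$ to the exponentially small value $\tfrac{k_i^2}{2}(\delta v_2)^2$, and to argue that such visits are guaranteed to occur (either because the sector $\{|x_h|\leq\max(|k_1|,k_2)|v_2|\}$ shrinks whenever $v_2\to 0$, or because the integrator accumulates against the boundary under any persistent $v_1$).
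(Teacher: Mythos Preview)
Your cascade decomposition (filter error first, then scalar $x_h$-error) and the choice of $V=\tfrac12(\delta x_h)^2$ match the paper's proof. The gap is in the mixed-mode step: the claim that the projection reaction satisfies $|r(t)|\le C\bigl(|\delta v_2|+|\delta\dot v_2|\bigr)$ is false. Take $\alpha_h=0$, $k_1=0$, $k_2=1$, $\omega_h=1$, and let trajectory~1 sit on the $k_1$-boundary with $(x_h,v_2,v_1)=(0,1,-1)$, so that mode~1 is active with reaction $r=k_1\dot v_2-\omega_h v_1=1$; let trajectory~2 have $(x_h',v_2')=(\tfrac12,1)$, strictly interior, mode~0. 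Then $\delta v_2=0$ while $|r|=1$. The reaction is an $O(1)$ quantity determined by $v_1,v_2,\dot v_2$, not by the incremental filter error, so your inequality $\dot V\le -2\alpha_h V+C'|\delta v_2|$ cannot be obtained from a bound on $|r|$.

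What the paper exploits instead is the \emph{sign} of $r\cdot\delta x_h$. It introduces the error-sector $\Omega_1=\{(\delta x_h,|\delta v_2|)\mid k_1|\delta v_2|\le\delta x_h\le k_2|\delta v_2|\}$ and its complement $\Omega_2$, and shows directly that $\delta x_h\,\delta\dot x_h\le-\alpha_h\,\delta x_h^2$ whenever $(\delta x_h,|\delta v_2|)\in\Omega_2$ --- the projection correction always points toward the interior of the sector, hence toward the other trajectory once $|\delta x_h|$ exceeds $k_h|\delta v_2|$. (In the example above $\delta x_h=-\tfrac12$ and $r=1$, so $r\,\delta x_h<0$.) From this one runs a shrinking-box argument: define $\rho_i=\sup_{\tau\ge t_i}|\delta v_2(\tau)|$ and $M_i=\{|\delta v_2|\le\rho_i,\ |\delta x_h|\le k_h\rho_i\}$; the monotonicity in $\Omega_2$ forces entry into $M_i$ in finite time, and the exponential decay of $\rho_i$ then drives $|\delta x_h|\to0$. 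For $\alpha_h=0$ the paper does \emph{not} rely on simultaneous boundary visits resetting $V$ as you suggest; rather it argues that boundedness of $v_1$ forces the integrator mode to terminate, after which the two trajectories are in different modes (still in $\Omega_2$) and $\delta x_h\,\delta\dot x_h<0$ strictly, so $M_0$ is reached in finite time. Your ``common boundary mode'' reset idea would require both trajectories to hit the same boundary simultaneously, which is not guaranteed.
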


\begin{proof}[Sketch of proof]
    We show that the differences $\xh - \xh'$ and $
        v_2 - v_2'
    $ converge to a box in finite time, and the box gets smaller as time increases and its size becomes zero at infinite time. Denote $\delta x_v = x_v - x'_v$, $\dv = v_2 - v_2'$, and $\dxh = \xh - \xh'$. First, $\| \delta x_v \|$ and $|\dv|$ have upper bounds that exponentially decrease over time since $A_{v}$ is Hurwitz. 
    Define two sets
    \small
\begin{align*}
    \Omega_1 &= \{ (\dxh, |\dv|) \in \ree^2 \mid (\dxh - k_1 |\dv|) (\dxh - k_2 |\dv|) \leq 0 \},\\
    \Omega_2 &= \ree^2 \setminus \Omega_1.
\end{align*}
\normalsize
Then $\dxh^2$ can be shown to be decreasing if $(\dxh, |\dv|) \in \Omega_2$ as
%. By definition of the set $\Omega_2$, $(\dxh,\dv)$ violates the $[k_1,k_2]$-sector, thus in that case, $(x_h,v)$ and $(x'_h,v')$ must be in the same cone (either $K$ or $-K$), and $v,v' \neq 0$. In addition, two trajectories cannot be in the same gain mode ($k_1$ or $k_2$), because if they were, the pair $(\dxh, \dv)$ would be in the set $\Omega_1$. Next, all the other possibilities are analyzed to show that when $(\dxh,\dv) \in \Omega_2$, 
$\dot \dxh \dxh \leq -\alpha_h \dxh^2 \leq 0$ when $(\dxh, |\dv|) \in \Omega_2$. Now we show incremental stability similarly to the proof of \cite[Theorem 1]{EijHee_CDC23a}.
% \todo[inline]{If $\alpha_h = 0$, $\delta x_h^2$ is decreasing except for the case that both trajectories are in integrator mode. But this case cannot happen forever, so there will still be a finite time that $(\delta x_h, \delta v)$ goes inside a smaller box}

Let $\rho_i = \sup_{t_i \leq \tau \leq t} |\dv(\tau)|$, and $M_i = \{ (\dxh,|\dv|) \in \ree^2 \mid |\dv| \leq \rho_i \wedge |\dxh| \leq k_h \rho_i \}$, with $k_h = \max (|k_1|,|k_2|)$, and $i \in \mathbb{N}$. First, consider $t_0 = 0$ and the box $M_0$.
If $(\dxh(0),|\dv(0)|) \notin M_0$, there is some $t > t_0$ such that $(\dxh(t),|\dv(t)|) \notin M_0$. Because $|\delta_v(t)| \leq \rho_0$, it must hold that $|\dxh| > k_h \rho_0 \geq k_h |\dv(t)|$, thus $(\dxh(t),|\dv(t)|) \in \Omega_2$. It follows that if $\alpha_h >0$
$$
    \dot \dxh \dxh \leq -\alpha_h \dxh^2 \implies |\dxh| \leq \mathrm{e}^{-0.5 \alpha_h t} |\dxh(0)|,
$$
and if $\alpha_h = 0$, $\dot \dxh \dxh < 0$ if the two trajectories are in different modes, and $\dot \dxh \dxh \leq  0$ if the two trajectories stay in the integrator mode. Since $v_1$ is bounded, the integrator mode is guaranteed to end, and the two trajectories must be in different modes, or $(\dxh,|\dv|)\in \Omega_1$ in finite time.
Therefore there must exist a finite time $t'_0$ such that $(\dxh(t'_0),|\dv(t'_0)|) \in M_0$. If $(\dxh(0),|\dv(0)|) \in M_0$, a simple selection is $t'_0 = t_0$.

As $|\dv|$ has a shrinking upper bound, there exists a finite time $t_1 > t'_0$ such that $\rho_1 < \rho_0$. Then consider the smaller box $M_1$, and with the same reasoning, there must exist a finite time $t'_1$ such that $(\dxh(t'_1),|\dv(t'_1)|) \in M_1$. Now for finite $i$, there must exist a finite time $t'_i$ such that $(\dxh(t'_i),|\dv(t'_i)|) \in M_i$. Let $i \to \infty$, one finds
\begin{equation*}
    0 \leq \lim_{t \to \infty} |\dxh(t)| \leq \lim_{i \to \infty} k_h \rho_i = 0,
\end{equation*}
therefore $\lim_{t \to \infty} |\dxh(t)| = 0$. Combining with the exponential decay of $\| \delta x_v\|$ gives the final result.
\end{proof}

\begin{remark}
    This incremental stability result and the computed steady-state response to a sinusoidal input in Section \ref{sec: DF} proves that the steady-state response given a sinusoidal input exists, is unique, and asymptotically attractive, as already discussed in Remark~\ref{remark: responses}.
\end{remark}
% \todo[inline]{Maurice remarks on stronger properties $\alpha_h >0$} {\bf SEE suggestion below}

\begin{remark}
In case of $\alpha_h=0$, we can also prove incremental Lyapunov stability and in case of $\alpha_h>0$, we can establish incremental input-to-state stability and incremental global exponential stability properties \cite{Angeli2002}, by adapting the proofs of \cite{EijHee_CDC23a}. 
% {\bf [RIGHT REF?]}. 
We leave the proofs for future work. These properties, in case of $\alpha_h>0$ can be used for the existence of  unique bounded steady-state responses to bounded inputs (including sinusoidal inputs), which can be proven to be globally exponentially stable. In fact, for periodic inputs, the unique steady-state response will be periodic as well and will have the same period. Explicit analytical computation of the steady-state response as done in Sections~\ref{sec:response lead} and ~\ref{sec:response lag} for $\alpha_h=0$ is not possible for $\alpha_h>0$, and one would have to resort to numerical methods for computing them and the DF. We focussed here on the case of $\alpha_h=0$ because of the  steady-state solution that we could compute analytically.
\end{remark}

\section{Numerical examples} \label{sec: examples}
\subsection{Gain-loss mitigation}
The added filter in HIGS regulates the mode-switching behavior of FHIGS, thus mitigating the gain-loss problem caused by too frequent switching. HIGS is susceptible to this problem (\cite{HeeEij_CCTA21a}), as any nonlinear integrator, especially when high-frequency signals are mixed into the input. For example, consider an input
\begin{equation*}
    e(t) = \sin{(\omega_1 t)} + \sin{(\omega_2 t)},
\end{equation*}
with $\omega_1 = 2\pi$, $\omega_2 = 20\pi$. The low-frequency signal $\sin{(\omega_1 t)}$ is considered the \textit{pure} signal, while the $ \sin{\omega_2 t}$ is considered as a high-frequency noise. HIGS's response to $e(t)$ is nearly identical to zero (see Fig.~\ref{fig:gainloss}, top). This problem is referred to as gain loss. Adding lifting and de-lifting filters as pre- and post-filters is a strategy to overcome the gain-loss problem of HIGS (\cite{EijHee_CONF21a}). The lifting filter is chosen as an inverted notch filter with the transfer function
\begin{equation} \label{eq: notch filter}
    N(s) = \frac{\frac{s^2}{\omega_N^2} + \frac{2\beta_1 s}{\omega_N} + 1}{\frac{s^2}{\omega_N^2} + \frac{2\beta_2 s}{\omega_N} + 1},
\end{equation}
with $\omega_N = 2\pi$, $\beta_1 = 0.2$, $\beta_2 = 0.02$, and the de-lifting filter is the inverse of it ($N^{-1}(s)$). The notch filter amplifies the low-frequency part by 10 times, hence HIGS primarily switches based on this low-frequency content and experiences less gain-loss. However, due to the de-lifting filter, the output is significantly changed compared with the HIGS's response to the pure signal. The filter of FHIGS is selected as a notch filter \eqref{eq: notch filter} ($\omega_N = 20\pi$, $\beta_1 = 0.02$, $\beta_2 = 0.2$) to attenuate the high-frequency signal. With this filter, FHIGS suffers less from gain-loss and provides an output comparable to the HIGS's response to the pure signal.
\begin{figure}[tbp]
    \centering
    \smallskip\smallskip\includegraphics[width=0.45\textwidth]{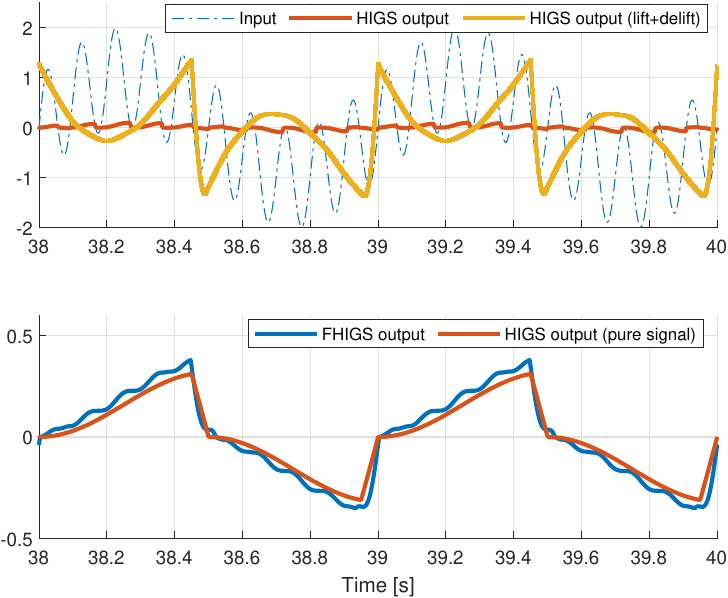}
    \caption{Gain loss mitigation. HIGS produces near zero output due to excessive switching (top, red). Gain-loss mitigation with lifting and de-lifting filters significantly alters the output (top, yellow), while FHIGS efficaciously reduces gain-loss with some mild changes in the output (bottom).}
    \label{fig:gainloss}
\end{figure}

\subsection{Control of a plant with a real pole to track a step input}
While HIGS has been shown to provide performance improvements to control systems, FHIGS can enhance the obtained benefits due to the added filtering. In this section, we provide an example of controlling a plant with a real pole to track a step reference. All linear controllers cause this plant to overshoot, while a HIGS-based controller may not \cite{SeronLimit,EijHee_CSL20a}. The transfer function of the plant is
\begin{equation}
    G(s) = \frac{1}{20s^2 - 5000}.
\end{equation}
A linear controller, a HIGS-based controller, and a FHIGS-based controller are tuned to control this plant. The HIGS-based controller has a HIGS-based lowpass filter (\cite{EijCCTA}) and a HIGS-based PI controller (\cite{EijHee_CSL20a}). The FHIGS-based controller replaces the HIGS element in the HIGS-based PI part with a FHIGS element. %The controller parameters, as well as the simulation files, are provided in our GitHub repository.
% \todo[inline]{write controller parameters}
In our example, FHIGS is shown to overcome the overshoot limitation of linear controllers and to perform slightly better in terms of settling time as compared to HIGS. The step responses of the plant are shown in Fig.~\ref{fig:simul}. The linear controller causes the plant to overshoot, while both HIGS and FHIGS do not. Moreover, they show a significant improvement in settling times with respect to the linear response.

\begin{figure}[htbp]
    \centering
    \includegraphics[width = 0.45 \textwidth]{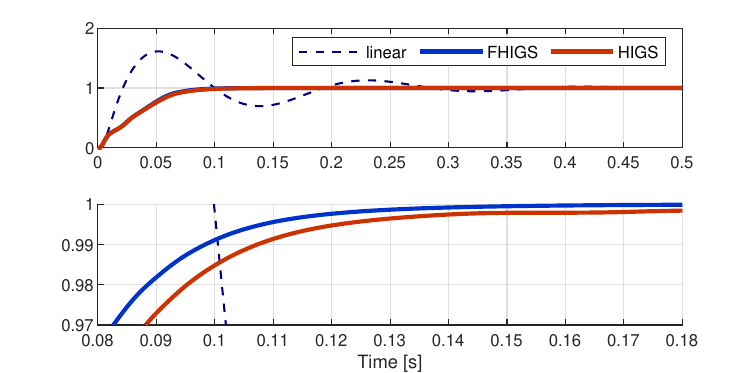}
    \caption{Step responses from 0.0s to 0.5s (top) and from 0.08s to 0.8s (bottom) with three controllers: linear, HIGS-based, and FHIGS-based}
    \label{fig:simul}
\end{figure}

\section{Conclusion} \label{sec: conclusion}
In this paper, we introduced a controller structure termed filtered hybrid integrator-gain system (FHIGS). This controller generalizes HIGS by adding linear filters in the nonlinear structure, allowing for more flexibility in controller tuning. FHIGS uses a projection operator to keep the filtered-input-output pair in a sector set, resulting in a three-mode operation. FHIGS was formally presented in the framework of extended projected dynamical systems and we derived a piecewise linear model descripton of it, which is useful for stability analysis and describing function computations. In fact, a quasi-linear analysis in terms of describing function is provided for FHIGS, which was supported by a formal analysis of incremental stability properties.
% {\bf see if you want to keep it, per discussion in chat}. 
Two numerical case studies show that FHIGS can even further enhance the performance benefits of HIGS. In particular, FHIGS can largely mitigate the gain-loss problem of HIGS with a single filter, instead of using a more complicated pre- and post-filtering scheme, and increased performance advantages are shown both in time- and frequency-domain. 

\bibliographystyle{IEEEtran}
\bibliography{bibBardia,bibOther,bibSebastiaan}

\appendix
\renewcommand{\baselinestretch}{}

\subsection{Proof of Theorem 1}
\begin{proof}
The sector set is represented as the union of two cones $\cal K := \{ (x_c,e) \in \ree^{n_{c}+1} \mid M \begin{bmatrix}
    x_c^\top & e
\end{bmatrix}^\top \geq 0 \}$ and $-\cal K := \{ (x_c,e) \in \ree^{n_{c}+1} \mid -M \begin{bmatrix}
    x_c^\top & e
\end{bmatrix}^\top \geq 0 \}$, with
\begin{equation}
    M = \begin{bmatrix}
        1 & 0_{n_{v_1}}&-k_1 C_{v_2}&-k_1 D_{v_2}\\ -1& 0_{n_{v_1}} &k_2 C_{v_2}& k_2 D_{v_2}
    \end{bmatrix}.
\end{equation}
We also use the index number $I(x_c,e) = \{ i \in \{1,2\} \mid F_i \begin{bmatrix}
    x_c^\top & e
\end{bmatrix}^\top = 0 \}$, with $F_i$ being the $i$-th row of $F$, $i \in \{1,2\}$.

Next, we explicitly compute the $x_c$-dynamics using the KKT optimality condition as in \cite{DeeSha_AUT21a} to obtain
\begin{equation}
\begin{bmatrix}
    \dot x_c \\ \dot e
\end{bmatrix}= \begin{bmatrix}
    f_c(x_c,e) +
     \bar{E}\eta^\star \\ \dot e
\end{bmatrix} =: f'_c(x_c,e,\dot e) +
     \bar{E}'\eta^\star,
\end{equation}
with $\bar{E} = \begin{bmatrix}
    1 &0_{n_{c}}^\top
\end{bmatrix}^\top$, $E' = \begin{bmatrix}
    \bar{E}^\top &0
\end{bmatrix}^\top,$
\begin{equation}
    \eta^\star = (E'^\top E')^{-1} E'^\top F_{I(x_c,e)}^\top \lambda_{I(x_c,e)}.
\end{equation}
Next, we find the complementarity condition 
\begin{equation}
    0 \leq \lambda_{I(x_c,e)} \perp F_{I(x_c,e)} \left(f'_c(x_c,e,\dot e) + E\eta^\star\right) \geq 0.
\end{equation}
 If $F_{I(x_c,e)} f'_c(x_c,e,\dot e) \geq 0$, the solution $\lambda_{I(x_c,e)} = 0$ is admitted, and the projection is not active. If $F_{I(x_c,e)} f'_c(x_c,e,\dot e) < 0$, the solution is
\begin{equation}
    \lambda_{I(x_c,e)} = - F_{I(x_c,e)} f'_c(x_c,e,\dot e).
\end{equation}
Straightforward further calculations lead to
\small
\begin{itemize}
    \item $I(x_c,e) = 1$: $(x_c,e,\dot e) \in \cal F_1$
    \begin{equation*}
    \begin{aligned}
        \dot x_h &= w_h e - (w_h e - k_1( C_{v_2}(A_{v_2}x_{v_2}+ B_{v_2}e) + D_{v_2}\dot e))\\ &= k_1 C_{v_2}(A_{v_2}x_{v_2}+ B_{v_2}e) + k_1 D_{v_2}\dot e= k_1 \dot v_2
    \end{aligned}
    \end{equation*}
    \item $I(x_c,e) = 2$: $(x_c,e,\dot e) \in \cal F_2$
    \begin{equation*}
        \begin{aligned}
            \dot x_h &= w_h e + (-w_h e + k_2 (C_{v_2}(A_{v_2}x_{v_2}+ B_{v_2}e)+D_{v_2}\dot e))\\ &= k_2 C_{v_2}(A_{v_2}x_{v_2}+ B_{v_2}e) + k_2 D_{v_2}\dot e = k_2 \dot v_2
        \end{aligned}
    \end{equation*}
\end{itemize}
\normalsize
We obtain the matrices $A_1,B_2,A_2,B_2$ as stated in Theorem~\ref{theorem: pwl}
\end{proof}

\subsection{Switching instant $\gamma$ when $\phi\geq 0$}
The solution to \eqref{eq: solving gamma} that satisfies $0\leq \gamma \leq \pi/ \omega$ is 
\begin{equation*}
    \gamma= \frac{1}{\omega} \cos ^{-1}\left(\frac{\sqrt{H_+(\omega)}+K_+(\omega)}{G^2 k_2^2 \omega ^2+2 G k_2 \omega  \omega _h \sin (\phi )+\omega _h^2}\right)
\end{equation*}
with
\small
\begin{equation*}
\begin{aligned}
    H_+(\omega) &= G^2 k_2^2 \omega ^2 \cos ^2(\phi ) (G^2 k_2^2 \omega ^2-G^2 k_1^2
   \omega ^2 \sin ^2(\epsilon +\phi ) \\&-2 G k_1 \omega  \omega _h \cos (\epsilon ) \sin (\epsilon +\phi )+2 G k_2 \omega 
   \omega _h \sin (\phi )\\&+\omega _h^2 \sin ^2(\epsilon )),\\
   K_+(\omega) &= G k_1 \omega  \sin (\epsilon+\phi) \left(G k_2
   \omega  \sin (\phi )+\omega _h\right)\\&+G k_2 \omega  \omega _h \cos (\epsilon ) \sin (\phi )+\omega _h^2 \cos
   (\epsilon )
\end{aligned}
\end{equation*}

\subsection{Switching instant $\gamma$ when $\phi< 0$}
The solution to \eqref{eq: gamma lag} that satisfies $0\leq \gamma \leq \pi/ \omega$ is 
\begin{equation*}
    \gamma= \frac{1}{\omega} \cos ^{-1}\left(\frac{\sqrt{H_-(\omega)}+K_-(\omega)}{G^2 k_{1}^2 \omega ^2+2 G k_{1} \omega  \omega _h \sin (\phi )+\omega _h^2}\right)
\end{equation*}
with
\small
\begin{equation*}
\begin{aligned}
    H_-(\omega) &= G^2 k_{1}^2 \omega ^2 \cos ^2(\phi ) (G^2 k_{1}^2 \omega ^2-G^2 k_2^2
   \omega ^2 \sin ^2(\epsilon +\phi ) \\&-2 G k_2 \omega  \omega _h \cos (\epsilon ) \sin (\epsilon +\phi )+2 G k_{1} \omega 
   \omega _h \sin (\phi )\\&+\omega _h^2 \sin ^2(\epsilon )),\\
   K_-(\omega) &= G k_2 \omega  \sin (\epsilon+\phi) \left(G k_{1}
   \omega  \sin (\phi )+\omega _h\right)\\&+G k_{1} \omega  \omega _h \cos (\epsilon ) \sin (\phi )+\omega _h^2 \cos
   (\epsilon )
\end{aligned}
\end{equation*}

\subsection{Fourier coefficients $a_1$ and $b_1$}
% \begin{equation*}
%     \begin{split}
%     a_1(\omega) = \frac{1}{2 \pi  \omega } (G k_1 \omega  (2 \cos (\omega \gamma -\omega \epsilon -\phi )-2 \cos (\omega \gamma +\omega \epsilon +\phi )+2 \omega \epsilon  \sin (\phi )+\cos (2
%    \omega \epsilon +\phi )+2 \phi  \sin (\phi )-\cos (\phi ))+G k_2 \omega  (2 (-\omega \gamma -\phi +\pi ) \sin (\phi )+\cos (2
%    \omega \gamma +\phi )-\cos (\phi ))-2 \omega _h (\cos (\omega \epsilon ) (\sin (\omega \epsilon )-2 \sin (\omega \gamma ))+\omega \gamma +\sin (\omega \gamma
%    ) \cos (\omega \gamma )-\omega \epsilon )),
% \end{split}
% \end{equation*}
% and
% \begin{equation*}
%     b_1(\omega) = \frac{G k_1 \omega  ((\cos (\omega \epsilon )-2 \cos (\omega \gamma )) \sin (\omega \epsilon +\phi )+(\omega \epsilon +\phi ) \cos (\phi ))+G k_2
%    \omega  ((-\omega \gamma -\phi +\pi ) \cos (\phi )+\cos (\omega \gamma ) \sin (\omega \gamma +\phi ))+\omega _h (\cos (\omega \gamma )-\cos
%    (\omega \epsilon ))^2}{\pi  \omega }.
% \end{equation*}

\begin{equation*}
    a_1(\omega) = \frac{1}{2 \pi  \omega } A(\omega),
\end{equation*}
with
\begin{equation*}
    \begin{aligned}
        A(\omega) &= G k_1 \omega  (2 \textrm{c} (\omega \gamma -\omega \epsilon -\phi )-2 \textrm{c} (\omega \gamma +\omega \epsilon +\phi )\\&+2 \omega \epsilon  \textrm{s} (\phi )+\textrm{c} (2
   \omega \epsilon +\phi )+2 \phi  \textrm{s} (\phi )-\textrm{c} (\phi )) \\&+G k_2 \omega  (2 (-\omega \gamma -\phi +\pi ) \textrm{s} (\phi )+\textrm{c} (2
   \omega \gamma +\phi )\\&-\textrm{c} (\phi ))-2 \omega _h (\textrm{c} (\omega \epsilon ) (\textrm{s} (\omega \epsilon )-2 \textrm{s} (\omega \gamma ))+\omega \gamma \\&+\textrm{s} (\omega \gamma
   ) \textrm{c} (\omega \gamma )-\omega \epsilon )
    \end{aligned}
\end{equation*}
and
\begin{equation*}
    b_1(\omega) = \frac{1}{\pi  \omega } B(\omega),
\end{equation*}
with
\begin{equation*}
    \begin{aligned}
        B(\omega) &= G k_1 \omega  ((\textrm{c} (\omega \epsilon )-2 \textrm{c} (\omega \gamma )) \textrm{s} (\omega \epsilon +\phi )+(\omega \epsilon +\phi ) \textrm{c} (\phi )) 
        \\&+G k_2 \omega  ((-\omega \gamma -\phi +\pi ) \textrm{c} (\phi )+\textrm{c} (\omega \gamma ) \textrm{s} (\omega \gamma +\phi ))
        \\&+\omega _h (\textrm{c} (\omega \gamma )-\textrm{c}(\omega \epsilon ))^2
    \end{aligned}
\end{equation*}

% \todo[inline]{too long equation}
\end{document}